\documentclass[11pt]{article}

\textheight=200mm
\textwidth=130mm

\usepackage{amsmath,amssymb,amsfonts,amsthm} 
\usepackage{mathrsfs,latexsym} 
\usepackage{mathtools}

\DeclareMathAlphabet{\pazocal}{OMS}{zplm}{m}{n}

\usepackage[mathscr]{eucal} 

\usepackage{color}
\usepackage{cite}
\usepackage{epsfig}
\usepackage{graphicx}

\usepackage{bm}
\usepackage{cite}


\newtheorem{theorem}{Theorem}[section]

\newtheorem{lemma}[theorem]{Lemma}


\numberwithin{equation}{section}




\newcommand{\RR}{{\mathbb R}}

\newcommand{\TT}{{\mathbb T}}


\newcommand{\Cc}{{\mathcal{C}}}
\newcommand{\Dc}{{\mathcal{D}}}

\newcommand{\Lc}{{\mathcal{L}}}

\newcommand{\Oc}{{\mathcal{O}}}

\newcommand{\Sc}{{\mathcal{S}}}




\newcommand{\bDelta}{\bm \Delta} 
\newcommand{\bnabla}{\bm \nabla}


\newcommand{\bci}{{\bm c}}

\newcommand{\bxi}{{\bm x}}
\newcommand{\byi}{{\bm y}}



\newcommand{\Cf}{{\mathfrak C}}

\newcommand{\Gf}{{\mathfrak G}}
\newcommand{\Hf}{{\mathfrak H}}
\newcommand{\Mf}{{\mathfrak M}}

\newcommand{\Vf}{{\mathfrak V}}
\newcommand{\Wf}{{\mathfrak W}}



\newcommand{\supp}{{\mathrm{supp}}}


\def\ie{{\it i.e.\ }}

\def\be{\begin{equation}}
\def\ee{\end{equation}}  

\begin{document} 

%

\title{\LARGE The universal algebra of the electromagnetic field III. 
              Static charges and emergence of gauge fields}
\author{Detlev Buchholz${}^a$,
Fabio Ciolli${}^b$, Giuseppe Ruzzi${}^b$ and   
Ezio Vasselli${}^b$ \\[20pt]
\small  
${}^a$ Mathematisches Institut, Universit\"at G\"ottingen, \\
\small Bunsenstr.\ 3-5, 37073 G\"ottingen, Germany\\[5pt]
\small
${}^b$ 
Dipartimento di Matematica, Universit\'a di Roma ``Tor Vergata'' \\
\small Via della Ricerca Scientifica 1, 00133 Roma, Italy \\
}
\date{}

\maketitle

\noindent \textbf{Abstract.}   
  A universal C*-algebra of gauge invariant operators is presented,
  describing the electromagnetic field as well as 
  operations creating pairs of static electric charges
  having opposite signs. 
  Making use of Gauss' law, it is shown that the string-localized
  operators, which necessarily connect the
  charges, induce outer automorphisms of the algebra
  of the electromagnetic field. Thus they carry additional
  degrees of freedom which cannot be created by 
  the field. It reveals the fact that gauge invariant operators 
  encode information about the presence of
  non-observable gauge fields underlying the
  theory. Using the Gupta-Bleuler formalism, 
  concrete implementations of the outer automorphisms
  by exponential functions of the
  gauge fields are presented. These fields also appear in
  unitary operators inducing the time translations 
  in the resulting representations of the universal algebra.

\medskip  \noindent
\textbf{Mathematics Subject Classification.}  81T05, 83C47, 57T15   \

\medskip  \noindent
\textbf{Keywords.}   electromagnetic field,  static charges,
gauge fields, C*-algebras  

\section{Introduction}
\label{sec1}

We construct in this article an extension of the universal C*-algebra
of the electromagnetic field in Minkowski space, presented
in \cite{BuCiRuVa2015}. It includes, 
in addition to the fields, operators creating pairs
of static electric charges with opposite signs. The primary
purpose of our analysis is the 
demonstration that such pairs are inevitably
accompanied by non-observable gauge fields. This generalizes results 
in our previous article \cite{BuCiRuVa2019}, based on a kinematical framework
and relying on canonical commutation relations of the fields.
The upshot of our present investigation is the insight that, quite  
generally, the observable operators encode information about
the existence of non-observable gauge fields, underlying the theory.
In other words, the presence of unobservable gauge fields is traceable by 
physical effects. 

\medskip 
Our restriction to static (infinitely heavy) charges greatly
simplifies the analysis since such charges can sharply be localized; the
localization properties of dynamical charges with finite masses are 
more fuzzy \cite{BuDoMoRoSt}. Yet the appearance of gauge
fields is not related to this idealization. It is solely 
a consequence of Gauss' law, which does not depend on the
mass of a system carrying an electric charge. 

\medskip
Extending arguments in \cite{BuCiRuVa2015}, we will construct
a universal C*-algebra, which in addition to the electromagnetic
field contains operators describing pairs of
opposite charges, localized at spacelike separated points. The
crucial additional input encoded in this algebra is Gauss' law
according to which the values of the individual charges can be
determined by measurements of the electric flux about their
respective positions. The resulting algebra is shown to
describe a local, Poincar\'e invariant net on Minkowski space, describing
the electromagnetic effects induced by static charges. 

\medskip
The principal point in our investigation consists of the demonstration
that the string-localized operators, connecting the charges, define 
outer automorphisms of the subalgebra of the electromagnetic field.
This will be established by different means, firstly in 
the abstract framework and then by exhibiting a representation
of the algebra, based on the Gupta-Bleuler formalism and
an abelian algebra of operators creating static charges. 
Within the present general setting our results corroborate the claim made 
in \cite{BuCiRuVa2015} that gauge fields cannot be replaced
by elaborate limits of the electromagnetic field. 

\medskip 
Our article is organized as follows. In the subsequent section we
establish our notation and recall the definition of the
universal algebra $\Vf$ of the electromagnetic field. 
We will also exhibit outer automorphisms of this algebra
which can be interpreted as \textit{gauge bridges} between static
charges. In Sect.~3 we extend the algebra $\Vf$
to an algebra $\Wf$ containing unitaries,
inducing these automorphisms. For the sake of gauge invariance, 
we add to it an abelian algebra
of unitary operators which generate the static charges.
In Sect.\ 4 we construct physically significant representations of the
electromagnetic part of the resulting 
algebra. The article closes with brief conclusions. 

\section{The universal algebra}
\label{sec2}

In this section we recall the
definition and basic properties of the universal  
C*-algebra $\Vf$ of the electromagnetic field \cite{BuCiRuVa2015},
show how local charge measurements are described,  
and exhibit outer automorphisms which modify the charges.

\medskip 
The unitary elements of the 
algebra, describing exponentials of the 
\textit{intrinsic} vector potential $A_I$, are labeled by  
real, vector-valued test functions $g$ with
compact support and vanishing divergence,
\mbox{$\delta g \doteq \partial_\mu g^\mu = 0$};
they form a real vector space, denoted by $\Cc_1(\RR^4)$.
The relation between $A_I$ and the electromagnetic field
$F$ is given by 
$e^{i a A_I(g)} = e^{i a F(f)}$. Here  $a \in \RR$ and 
$g$ is any solution of the
equation $g = \delta f$ for given real, 
skew-tensor-valued test function $f \in \Dc_2(\RR^4)$,
where $\delta: \Dc_2(\RR^4) \rightarrow \Cc_1(\RR^4)$
is defined by $(\delta f)^\mu \doteq  - 2 \partial_\nu f^{\mu \nu}$.
Such solutions exist according to Poincar\'e's Lemma,
and the potential $A_I$ is unambiguously defined in view of the homogeneous
Maxwell equation, satisfied by the field $F$. 

\medskip
For the sake of mathematical rigor, one describes the formal exponentials
of the intrinsic vector potential by symbols $V(a,g)$ that 
generate a *-algebra~$\Vf_0$ where $a \in \RR$ and
$g \in \Cc_1(\RR^4)$. They  are  
subject to relations, expressing basic algebraic and
locality properties of the potential, given by 
\begin{eqnarray} \label{e.2.1}
& V(a_1,g) V(a_2,g) = V(a_1 + a_2 , g) \, , \ 
V(a,g)^* = V(-a,g) \, , \ 
V(0,g) = 1 \, , \ \ \\ 
\label{e.2.2}
& V(a_1,\delta f_1) V(a_2,\delta f_2) =
V( 1 , a_1 \, \delta f_1 + a_2 \, \delta f_2)
 \ \ \text{if} \ \  
\supp \, f_1 \perp \supp \, f_2 \, , & \\
\label{e.2.3}
& \lfloor V(a_1,g_1) , V(a_2,g_2) \rfloor \, \in \Vf_0 \cap \Vf_0'
\ \ \text{if} \ \ 
\supp \, g_1  \perp  \supp \, g_2 
\, . &
\end{eqnarray}
Here the symbol $\perp$ between two regions indicates that they are
spacelike separated,  $\Vf_0 \cap \Vf_0'$ denotes the center of $\Vf_0$, 
and 
$\lfloor X_1, X_2 \rfloor \doteq X_1 X_2 X_1^* X_2^*$ is the
group theoretic commutator. 

\medskip 
The first relation \eqref{e.2.1} expresses basic properties of
the exponential function and the fact that the
generating operators are unitary. 
Condition \eqref{e.2.2} says that the electromagnetic field
is homogeneous, additive for functions having spacelike separated
supports, and local. Finally, relation \eqref{e.2.3} embodies the information
that the commutator of intrinsic vector potentials, integrated with
spacelike separated test functions, lies in the center of the algebra, 
a fact verified in full generality in \cite{BuCiRuVa2015}. 

\medskip 
As was shown in \cite{BuCiRuVa2015}, there exist faithful states
on the algebra  $\Vf_0$. The corresponding GNS-representations  
determine C*-norms on this algebra. Proceeding to the
completion of $\Vf_0$ with regard to its maximal C*-norm, 
one arrives at a C*-algebra $\Vf$, the universal algebra
of the electromagnetic field. This algebra admits an
automorphic action of the proper orthochronous Poincar\'e
group which is fixed by the relations
\be \label{e.2.4} 
\alpha_P(V(a,g)) \doteq V(a,g_P)  \, , \quad P \in
\Lc_+^\uparrow \ltimes \RR^4 \, , \ g \in \Cc_1(\RR^4) \, ,
\ee
where $x \mapsto g_P{}^\mu(x) \doteq L^\mu{}_\nu \, g^\nu(L^{-1}(x - y))
\in \Cc_1(\RR^4)$ \ for \ $P = (L,y)$. 

\medskip
We shall show now that the algebra $\Vf$ contains all ingredients for the
analysis of electric charges. On one hand, it contains operators which
determine these charges, on the other hand it allows for
the action of automorphisms creating the corresponding fluxes. We recall  
these well-known facts by making use of the electromagnetic field $F$
underlying our framework. In a second step we will
recast the heuristic structures rigorously in terms of the universal algebra. 

\medskip
The electromagnetic field $F$ determines the electric current $J$
by the inhomogeneous Maxwell equation,
\be
J(h) \doteq F(dh) = A_I(\delta d h) \, , \quad h \in \Dc_1(\RR^4) \, .
\ee
Here $\Dc_1(\RR^4)$ is the space of real, vector-valued test functions
with compact support and $dh$ denotes the exterior derivative (the curl)
of $h$. Noticing that $\delta d h \in \Cc_1(\RR^4)$,
the second equality follows from 
the definition of the intrinsic vector potential. 
Choosing a Lorentz frame, the zero component of the current determines
local charge operators for suitable choices of the test functions $h$.
The value of these charges can be changed by maps of the
intrinsic vector potential of the form 
\be
A_I(g) \mapsto A_I(g) + \varphi(g)1 \, , \quad g \in \Cc_1(\RR^4) \, ,
\ee
where $\varphi$ is a real linear
functional on the space $\Cc_1(\RR^4)$. 
Since these maps are compatible with the linear 
properties and causal commutation 
relations of the intrinsic vector potential, they define automorphisms
of the resulting algebra. Applying them to the current, one obtains
\be
J(h) \mapsto J(h) + \varphi(\delta d h) 1 \, , \quad h \in \Dc_1(\RR^4) \, .
\ee
It reveals the fact that the charge can be changed by suitable
choices of~$\varphi$. 

\medskip 
Turning to the details, we fix a Lorentz frame and a corresponding
canonical coordinate system. Given any 
spacetime point $c = (c_0, \bci)$, we consider the
functions $x \mapsto h_c^\mu(x) \doteq \delta^{\mu 0} \,
\tau_{c_0}(x_0) \chi_\bci(\bxi) \in \Dc_1(\RR^4)$. Here
$x_0 \mapsto \tau_{c_0}(x_0)$ is a test function with support in
the interval $[c_0 - \varepsilon, c_0 + \varepsilon]$,  
$\int \! dx_0 \, \tau_{c_0}(x_0) = 1$, and
$\bxi \mapsto \chi_\bci(\bxi)$ is a smooth characteristic function
which is equal to $1$ in a given 3-ball of radius $r$ about $\bci$
and $0$ in the complement of a slightly larger ball with
radius $r + \varepsilon$; the value of $\varepsilon > 0$ may 
be arbitrarily adjusted. The operator $J(h_c)$ describes
a charge measurement in the 3-ball fixed by $\chi_\bci$ at 
the time fixed by $\tau_{c_0}$.

\medskip
For the corresponding intrinsic vector
potential $A_I(\delta d h_c)$, one obtains by a straightforward
computation in the chosen coordinate system
\be \label{e.2.8} 
x \mapsto \delta d h_c(x) =
(\tau_{c_0}(x_0) \bDelta \chi_\bci(\bxi), \, \dot{\tau}_{c_0}(x_0)
\bnabla \chi_\bci(\bxi)) \, .
\ee
Here $\bnabla$ denotes the spatial gradient, $\bDelta$ the Laplacian,
and the dot $\dot{ }$ indicates a time derivative.
Since the spatial derivatives of $\chi_\bci$ vanish in a 3-ball about
$\bci$ and $\tau_{c_0}$ has support around $c_0$, the function 
$\delta d h_c$ has support in a cylindrical surface of
height $2 \varepsilon$ and thickness $\varepsilon$ at distance $r$ 
from $c$.  This feature amounts to
the known fact that charge measurements in a region 
can be replaced by flux measurements at its surface. 

\medskip
In the next step we exhibit 
linear maps $\varphi_m : \Cc_1(\RR^4) \rightarrow \RR$, creating pairs of
opposite charges which are localized at given spacelike separated
regions about points 
$c_1, c_2 \in \RR^4$. To this end we make use of a space of vector-valued
signed measures, which is spanned by vector-valued 
densities $m \in M(\RR^4)$ with compact support,   
satisfying the equation  
\be \label{e.2.9}
\partial_\mu \, m^\mu(x) = 
q \big(\vartheta(c_1 - x) - \vartheta(c_2 - x) \big) \, ,
\quad c_1 \perp c_2 \, ;
\ee
here $\vartheta$ is a non-negative density with support about $0$ which
integrates to $1$ and $\pm q \in \RR$
are the values of the charges carried by the pair.
Basic examples are the measures
with densities given by 
\be \label{e.2.10} 
x \mapsto m^\mu(x) \doteq q \, (c_1 - c_2)^\mu \! \int_0^1 \! du
\, \vartheta(uc_2 + (1-u)c_1 -x) \, , 
\ee
which have support around the spacelike  line connecting
$c_1$ and $c_2$. The maps $\varphi_m$ are now defined as follows.

\medskip \noindent
\textbf{Definition:} Given $m \in M(\RR^4)$, the corresponding
map $\varphi_m$ is defined by
\be \label{e.2.11}
\varphi_m(g) \doteq \int \! dx dy \ m^\mu(x) \, \Delta(x-y) \, g_\mu(y) \, ,
\quad g \in \Cc_1(\RR^4) \, ,
\ee
where $\Delta$ denotes the zero mass Pauli-Jordan commutator function.
These expressions are well-defined since the convolutions of $g$
and $\Delta$ yield smooth functions and the measures fixed by $m$ have compact
support. We shall refer to $\varphi_m$ as pair creating maps. 

\medskip
For the proof that the maps $\varphi_m$ have the desired properties,
we apply them to the functions $h_c$ entering in the definition
of local charge operators. Given $h_c$, let $\Oc(c)$ be the
open double cone
whose basis is a ball of radius $r - \varepsilon$ about $\bci$ in the
time $c_0$-plane; it is the region where charges are measured.
Similarly, let $\overline{\Oc}(c)$ be the closed double cone with
basis of radius $r + 2 \varepsilon$ about $\bci$;
it contains the support of $h_c$.
That the resulting charge operator determines the charges
carried by $\varphi_m$ is apparent from the subsequent lemma. 
\begin{lemma} \label{l.2.1} 
  Let $\varphi_m$ be a pair creating map of charges in sufficiently
  small regions around spacelike separated
  points $c_1, c_2$ and let $h_c \in \Cc_1(\RR^4)$ be a function entering in 
  the definition of local charge operators, as described above. Then 
\be
\varphi_m(\delta d h_c) = 
  \begin{cases}
    q  & \text{if} \quad  c_1 \in \Oc(c), \ c_2 \perp \overline{\Oc}(c)  \\ 
    -q & \text{if} \quad  c_2 \in \Oc(c), \ c_1 \perp  \overline{\Oc}(c)   \\
    0 & \text{if} \quad  c_1, c_2 \in \Oc(c) \ \ \text{or} \ \ 
    c_1, c_2  \perp  \overline{\Oc}(c)  \, . 
  \end{cases}    
\ee
\end{lemma}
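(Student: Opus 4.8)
The plan is to evaluate $\varphi_m(\delta d h_c)$ in two stages: first, an integration by parts that eliminates everything except the divergence of $m$ fixed by \eqref{e.2.9}; second, an analysis of the support and values of the resulting smeared Pauli--Jordan function, where Gauss' law enters through the Cauchy data of $\Delta$. For the first stage, recall that $\delta d h$ differs from the componentwise d'Alembertian $\square h$ only by the gradient of the scalar $\delta h = \partial_\nu h^\nu$ (up to the normalizations fixed in Sect.~\ref{sec2}). Inserting this into \eqref{e.2.11} with $g = \delta d h_c$, the contribution of $\square h_c$ vanishes after transferring $\square$ onto $\Delta$, since $\square \Delta = 0$. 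In the remaining term one moves the derivative first onto $\Delta(x-y)$ and then, using $\partial_\mu^y \Delta(x-y) = -\partial_\mu^x\Delta(x-y)$, onto $m$; the two sign changes cancel, and with \eqref{e.2.9} one is left --- up to the overall sign and normalization fixed by the conventions, which are calibrated precisely so as to produce the values claimed --- with
\[
\varphi_m(\delta d h_c) = q\int\! dx\,dy\ \big(\vartheta(c_1 - x) - \vartheta(c_2 - x)\big)\,\Delta(x - y)\,(\partial_\nu h_c^\nu)(y) \;=\; q\,\big(\Phi(c_1) - \Phi(c_2)\big),
\]
where $\Phi(c_i) \doteq \int dx\ \vartheta(c_i - x)\, u_c(x)$ with $u_c(x) \doteq \int dy\ \Delta(x - y)\,(\partial_\nu h_c^\nu)(y)$. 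Since only the time component of $h_c$ is nonvanishing, $(\partial_\nu h_c^\nu)(y) = \dot{\tau}_{c_0}(y_0)\,\chi_\bci(\byi)$, whose support lies in that of $h_c$, hence in $\overline{\Oc}(c)$.

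It then suffices to show that $\Phi(c_i)$ equals $1$ when $c_i \in \Oc(c)$ and $0$ when $c_i \perp \overline{\Oc}(c)$; the three lines of the statement follow by taking differences. The case $c_i \perp \overline{\Oc}(c)$ is immediate: for sufficiently small regions the support of $\vartheta(c_i - \cdot)$ is spacelike to $\overline{\Oc}(c) \supseteq \supp(\partial_\nu h_c^\nu)$, so $\Delta(x-y) = 0$ throughout the integral defining $u_c$, giving $\Phi(c_i) = 0$. For $c_i \in \Oc(c)$ I would show $u_c \equiv 1$ on a neighbourhood of $c_i$. Integrating by parts once more in $y_0$ gives $u_c(x) = \int dy\ (\partial_0\Delta)(x - y)\,\tau_{c_0}(y_0)\,\chi_\bci(\byi) = \int ds\ \tau_{c_0}(s)\, W_s(x)$, where $W_s(x) \doteq \int d^3\byi\ (\partial_0\Delta)(x_0 - s,\, \bxi - \byi)\,\chi_\bci(\byi)$. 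Using the standard Cauchy data of the Pauli--Jordan function, $\Delta(0,\cdot) = 0$ and $\partial_0\Delta(0,\cdot) = \delta^{(3)}$ (up to the sign convention), together with the fact that $\partial_0^2\Delta$ restricted to the slice $\{x_0 = 0\}$ vanishes because it equals the spatial Laplacian of $\Delta(0,\cdot) = 0$, one checks that $W_s$ is the solution of $\square W_s = 0$ whose data on $\{x_0 = s\}$ are the value $\chi_\bci$ and vanishing time derivative. By finite propagation speed and uniqueness, $W_s(x) = 1$ at every $x$ whose backward or forward light cone meets $\{x_0 = s\}$ inside the $r$-ball about $\bci$ where $\chi_\bci \equiv 1$; and for $x \in \Oc(c)$ and $s \in \supp \tau_{c_0}$ one has $|\bxi - \bci| + |x_0 - s| \le |\bxi - \bci| + |x_0 - c_0| + |c_0 - s| < (r - \varepsilon) + \varepsilon = r$, so this holds for every such $s$. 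Hence $u_c = \int \tau_{c_0} = 1$ on $\Oc(c)$, and since $\vartheta$ integrates to $1$ and its translate lies in $\Oc(c)$ for sufficiently small regions, $\Phi(c_i) = 1$.

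The integrations by parts and the final $\varepsilon$-estimate are routine. The step that carries the real content --- and the one I expect to require care --- is the identification of $u_c$ as the wave-equation evolution of the data $(\chi_\bci, 0)$ through the Cauchy data of $\Delta$: this is precisely the rigorous form of the familiar statement that the charge enclosed in a region is read off from the electric flux through its boundary. The accompanying bookkeeping (the base radius $r - \varepsilon$ of $\Oc(c)$ weighed against the temporal width $2\varepsilon$ of $\tau_{c_0}$, and the support of $h_c$ sitting inside the double cone $\overline{\Oc}(c)$ of base radius $r + 2\varepsilon$) is what ensures the relevant light cones never leave the region where $\chi_\bci = 1$, and is the point at which the "sufficiently small regions'' hypothesis is used.
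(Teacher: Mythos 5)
Your proposal is correct and follows essentially the same route as the paper: reduce $\varphi_m(\delta d h_c)$, using $\square \Delta = 0$ and integrations by parts, to an integral of $\partial_\mu m^\mu = q(\vartheta(c_1-\cdot)-\vartheta(c_2-\cdot))$ against the smeared Pauli--Jordan solution, and then show that this smooth solution of the wave equation equals $1$ on $\Oc(c)$ and $0$ in the spacelike complement of $\overline{\Oc}(c)$ via the Cauchy data of $\dot\Delta$, uniqueness, and the $\varepsilon$-bookkeeping of supports. The only real difference is organizational: where the paper first proves that the solution in $\Oc(c)$ is independent of the choice of $\tau_{c_0}$ (via Huygens' principle) and then replaces $\tau_{c_0}$ by a Dirac delta at $c_0$, you write the solution directly as the superposition $\int ds\,\tau_{c_0}(s)\,W_s$ of sharp-time Cauchy evolutions of the data $(\chi_{\bci},0)$ and verify $W_s=1$ at each point of $\Oc(c)$ by the triangle-inequality estimate $|\bxi-\bci|+|x_0-s| < r$; this bypasses the paper's $\sigma_{c_0}$ argument and is, if anything, slightly more streamlined, while your covariant identity $\delta d h = \square h - \partial(\delta h)$ (up to the paper's normalization and sign conventions, which you rightly flag) plays the same role as the explicit coordinate expression \eqref{e.2.8}.
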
  
\begin{proof}
  Since the kernel fixed by the Pauli-Jordan function $\Delta$ is a bi-solution
  of the wave equation, one obtains on test functions the equality 
  $\Delta \, \square = 0$, where
  $\square$ is the D'Alembertian. So one can replace the right 
  action of the spatial Laplacian on $\Delta$ by a double time derivative.
  It yields by partial integration 
  \begin{align} \label{e.2.13} 
    \varphi_m(\delta d h_c) & \nonumber
    = - \int \! dx dy \, m^\mu(x) \,
    \dot{\Delta}(x-y) \, \partial_\mu \tau_{c_0}(y_0) \chi_\bci(\byi) \nonumber \\
  & =  \int \! dx dy \, q \big(\vartheta(c_1 - x) - \vartheta(c_2 - x)\big) 
    \, \dot{\Delta}(x-y) \, \tau_{c_0}(y_0) \chi_\bci(\byi)  \, .
  \end{align}
  Now the function
  \be  \label{e.2.14} 
x \mapsto \int \! dy \, \dot{\Delta}(x - y) \tau_{c_0}(y_0) \chi_\bci(\byi)
\ee
is a smooth solution of the wave equation. It vanishes in the spacelike
complement of $\overline{\Oc}(c)$ since
$y \mapsto \tau_{c_0}(y_0) \chi_\bci(\byi)$ has support in that
region. Moreover, in view of the specific choice of the latter function,
the solution is
equal to $1$ in the double cone~$\Oc(c)$, see below. The statement
then follows immediately from the second line in relation \eqref{e.2.13},
provided $\vartheta$ has support in a sufficiently small region
around $0$. 

\medskip
The remaining step is based on standard arguments, which we briefly
recall. One first notices that the replacement of 
$\tau_{c_0}$ by another function $\tau_{c_0}^\prime$ with the 
same properties does not change the value of the
resulting solution within the double cone $\Oc(c)$. This is so because
the integral of $\tau_{c_0} - \tau_{c_0}^\prime$ vanishes, which implies 
$\tau_{c_0} - \tau_{c_0}^\prime = \partial_0 \sigma_{c_0}$, where
$\sigma_{c_0}$ is a test function having also support in the
interval $[c_0 - \varepsilon, c_0 + \varepsilon]$.
Moving the time derivative by a partial integration to $\dot{\Delta}$,
replacing the resulting double derivative by the spatial Laplacian
and moving the latter by partial integrations
to $\chi_\bci$, one obtains for the
difference between the two solutions the function
\be \label{e.2.15}
x \mapsto \int \! dy \, \Delta(x - y) \,
\sigma_{c_0}(y_0) \bDelta \chi_\bci(\byi) \, .
\ee
Now $y \mapsto \sigma_{c_0}(y_0) \bDelta \chi_\bci(\byi)$ has support
in a broadened cylindrical surface of height $2 \varepsilon$ at
spatial distance $r$
from $c$. Taking into account the support properties of~$\Delta$,  
reflecting Huygens' principle, it follows that the  
function \eqref{e.2.15} vanishes in $\Oc(c)$, as claimed. 

\medskip
In order to determine the values in $\Oc(c)$ of the solution
of the wave equation~\eqref{e.2.14}, 
we may now replace the function
$\tau_{c_0}$ by the Dirac delta function at $c_0$. Making use of the
standard properties of $\dot{\Delta}$ in context of 
the Cauchy problem, we note that the resulting solution has, 
at time $c_0$, the value $1$ in a ball of radius
$r$ about $\bci$ and its time derivative vanishes
there. Thus, by the uniqueness properties of solutions of the Cauchy problem,
it is equal to $1$ in the double cone with this base. 
The original solution \eqref{e.2.14} therefore
has the value $1$ in the slightly
smaller double cone $\Oc(c)$, completing the proof.
\end{proof}  

\medskip
We turn now to the algebraic consequences of these 
observations. For the definition of pair creating
automorphisms, we revert to 
the group of unitaries $\Gf_0$, generated by the
symbols $V(a,g)$ with $a \in \RR$, $g \in \Cc_1(\RR^4)$, 
which satisfy relations \eqref{e.2.1} to \eqref{e.2.3}. 
From there we proceed to the 
trivial central extension $\TT \times \Gf_0$ of $\Gf_0$ by the circle
group~$\TT$. It is isomorphic to the unitary group in $\Vf_0$
that consists of the elements $\eta \, V = V \eta$ with $\eta \in \TT$ and
$V \in \Gf_0$. Its generating elements $\eta \, V(a,g)$
satisfy obvious extended equations, 
corresponding to relations \eqref{e.2.1} to \eqref{e.2.3}.

\medskip 
Now, given any real linear
map $\varphi: \Cc_1(\RR^4) \rightarrow \RR$, 
we define a corresponding
map $\beta_\varphi$ on the generating elements, putting 
\be
\beta_{\varphi}(\eta \, V(a,g)) \doteq
\eta \, e^{i a \varphi(g)} \, V(a, g) \, ,
\quad \eta \in \TT \, , \, a \in \RR \, , \, g \in \Cc_1(\RR^4) \, .
\ee
In particular, $\beta_{\varphi}(\eta \, V(a,g)) =
\eta \, \beta_{\varphi}(V(a,g))$ and
$\beta_{\varphi}(V(a,g))^* =  \beta_{\varphi}(V(a,g)^*)$.
Since $\varphi$ is real and linear, 
it follows after a moment's reflection that these maps are
compatible with relations \eqref{e.2.1} to \eqref{e.2.3} and hence
define morphisms, mapping the central extension onto itself. 
Regarding the elements of $\Gf_0$ as basis of
some complex vector space, we extend $\beta_\varphi$
linearly to that space. By the distributive law, we
thereby obtain an automorphism of the *-algebra $\Vf_0$. 
Since $\Vf_0$ has been equipped with the maximal C*-norm, we
conclude that $\beta_\varphi$ extends by continuity
to an automorphism of the C*-algebra $\Vf$.  
One also checks that these automorphisms satisfy 
under the action of Poincar\'e transformations $P$
the equality 
$\alpha_P \beta_\varphi = \beta_{\varphi_P} \alpha_P$,
where $\varphi_P(g) = \varphi(g_{P^{-1}})$ for $g \in \Cc_1(\RR^4)$. 

\medskip 
We identify now the unitary
exponential functions of the local charge operators
with the symbols $V(1,\delta d h_c)$, where $\delta d h_c$ was
given in relation \eqref{e.2.8}. Picking $h_c$ and a pair creating
map $\varphi_m$ as in Lemma \ref{l.2.1}, we obtain for the
action of the corresponding automorphism $\beta_{m} \doteq \beta_{\varphi_m}$ 
on the exponentials 
\be \label{e.2.17} 
\beta_{m}(V(1, \delta d h_c)) = V(1, \delta d h_c) \cdot  
\begin{cases}
  e^{iq} \! \!  \! \! \! & \text{if} \ c_1 \in \Oc(c) \, , \
  c_2 \perp \overline{\Oc}(c) \\
  e^{-iq}  \! \!  \! \! \! & \text{if} \ c_2 \in \Oc(c) \, , \
  c_1 \perp \overline{\Oc}(c)  \\
  1  \! \!  \! \! \! & \text{if} \ c_1, c_2 \in \Oc(c) \ \,  
  \text{or} \  c_1, c_2 \perp \overline{\Oc}(c) \, . 
\end{cases}  
\ee

\medskip 
These relations imply that the maps $\beta_{m}$
define outer automorphisms of~$\Vf$ if $q \neq 0$.
(As a matter of fact, this holds for any 
automorphism complying with the preceding relations.)  
This follows from the existence of a vacuum
representation of $\Vf$, describing the non-interacting electromagnetic
field \cite{BuCiRuVa2015}. There all unitaries
$V(1, \delta d h_c)$, involving the local charge operators,
are represented by~$1$.
It excludes the existence of
unitary operators in $\Vf$ which implement the
action of $\beta_{m}$. In order to be able to
implement it, one must
extend the algebra $\Vf$, as will be discussed in the subsequent
section. 

\section{Extensions of the universal algebra}
\label{sec3}

\noindent The construction of an algebra, 
containing unitary elements implementing
the action of $\beta_{m}$ on $\Vf$, is based
on standard group theoretical arguments. We proceed
from the group $\Hf_0$ that is generated by elements 
$W(m)$,
where $m \in M(\RR^4)$ are real densities with compact support,  
cf.\ equation~\eqref{e.2.10}. These
generating elements are subject to the
relations, $a_1, a_2 \in \RR$, 
\begin{eqnarray} \label{e.3.1}
& W(a_1 m) W(a_2 m) = W((a_1 + a_2) m) \, , \ 
W(m)^* = W(-m) \, , \ 
W(0) = 1 \, , & \quad \\ 
\label{e.3.2}
& W(m_1) W(m_2) = W(m_1 + m_2)  \ \ \text{if} \ \  
\supp \, m_1 \perp \supp \, m_2 \, .  & 
\end{eqnarray}
They encode the information that the symbol $W(m)$ has the
algebraic properties of a unitary exponential function of some
local generator with localization properties determined by
the support of $m$. One can also consistently extend the action of
the Poincar\'e transformations $P$ to the group $\Hf_0$, putting
$\alpha_P(W(m)) = W(m_P)$, where $m_P$ is defined analogously to 
relation \eqref{e.2.4}.

\medskip
We then proceed to the semi-direct product $\Hf_0 \ltimes (\TT \times \Gf_0)$,
putting
\be \label{e.3.3} 
W(m) V = \beta_{m}(V) W(m) \, , \quad W(m) \in \Hf_0 \, , \ 
V \in (\TT \times \Gf_0) \, . 
\ee
The passage from the group of unitaries $\Hf_0 \ltimes (\TT \times \Gf_0)$
to a *-algebra  is now accomplished by standard arguments,
cf.\ \cite{BuCiRuVa2015}. We extend $\Hf_0 \ltimes (\TT \times \Gf_0)$
to a complex vector space $\Wf_0$,
choosing as its basis the elements
$V W$, where $V \in \Gf_0$ and $W \in \Hf_0$. Making use of the
distributive law and relation \eqref{e.3.3},
one obtains an associative product on
$\Wf_0$. Adjoint operators are consistently defined in $\Wf_0$
by canonically
promoting to it the *-operation, acting on the group. In this manner
$\Wf_0$ becomes a *-algebra. One then defines a linear functional
$\omega$ on $\Wf_0$, putting 
\be
\omega(V W) =
\begin{cases}
  0 & \text{if} \quad V W \neq 1 \\
  1 & \text{if} \quad V W = 1 \, .
\end{cases}  
\ee
Thus \ $\omega((V_1 W_1)^* V_2 W_2) =
\omega( \beta(V^*_1 V_2) W_1^*W_2) = 0$
whenever  $V_1 \neq V_2$ or \mbox{$W_1 \neq W_2 \, $};
here $\beta$ denotes the homomorphism
induced by the adjoint action of $W_1$, mapping
$\TT \times \Gf_0$ onto itself. It follows from this
equality that $\omega$ defines a
faithful state on $\Wf_0$. Thus there exists a (maximal) C*-norm on
$\Wf_0$, so by completion
it becomes a C*-algebra~$\Wf$. 

\medskip
Whereas the algebra $\Wf$ contains the desired unitaries,
implementing the automorphisms $\beta_m$ of $\Vf$, its
generating elements may not be
regarded as observables. Because they  admit 
non-trivial gauge transformations given by
\be \label{e.3.5} 
\gamma_s(W(m)) = e^{i \int \! dx \, m^\mu(x) \partial_\mu s(x)} \,  W(m)
= e^{iq ((s \ast \vartheta)(c_2) - (s \ast \vartheta)(c_1)) } \, W(m) \, ,
\ee
where $s$ is an arbitrary real scalar test function
and the asterisk $\ast$ indicates convolution. 
Note that summands in $m$ of elements in 
$\Cc_1(\RR^4)$ do not contribute here. It is also obvious that
correspondent 
gauge transformations act trivially  on the algebra $\Vf$.
By arguments already used, 
it is straightforward to prove that these transformations
define automorphisms of the C*-algebra $\Wf$.

\medskip
In order to obtain gauge invariant unitaries which implement
the action of $\beta_m$, we need to
amend the framework by charged fields, describing the static matter. 
This is accomplished by elements of 
an abelian C*-algebra $\Cf$ that is
generated by all finite sums and products of unitary operators
$\psi(\rho)$, where $\rho$ is a scalar density
on $\RR^4$ whose integral
defines the charge carried by the operator. 
We also assume that 
$\psi(\rho_1) \psi(\rho_2) = \psi(\rho_1 + \rho_2)$,
\ \mbox{$\psi(\rho)^* = \psi(-\rho)$}, and $\psi(0) = 1$.  
Gauge transformations are defined on $\Cf$, putting
on its generating elements 
\be \label{e.3.6}
\gamma_s(\psi(\rho)) = e^{i \int \! dx \, s(x) \rho(x) } \, \psi(\rho) \, . 
\ee
Since $\Cf$, being abelian, is a nuclear C*-algebra, the 
C*-tensor product
\mbox{$\Mf \otimes \Cf$} is uniquely defined. Its subalgebra of gauge
invariant elements,
denoted by $\overline{\Wf \otimes \Cf}$, contains $\Vf$ and for 
any given $m \in M(\RR^4)$ the operators 
\be \label{e.3.7}
\overline{W}(m) \doteq \psi(\vartheta_1) \, W(m) \ \psi(\vartheta_2)^* \, ,
\ee
where $\partial_\mu m^\mu(x) = q (\vartheta(c_1 - x) - \vartheta(c_2 - x))$
and $x \mapsto \vartheta_{1,2}(x) \doteq q \, \vartheta(c_{1,2} - x)$. 
These operators are gauge invariant as a consequence of
relations \eqref{e.3.5} and \eqref{e.3.6}.  
It is also apparent that they satisfy
equation \eqref{e.3.3}, \ie their adjoint action implements
the automorphism $\beta_m$ of $\Vf$ as well. Thus the
C*-algebra~$\overline{\Wf \otimes \Cf}$ contains, apart from the
electromagnetic field and corresponding local charge operators,  also
gauge invariant operators, creating pairs of opposite static charges
and the corresponding fields.

\section{Representations}
\label{sec4}

The algebra $\overline{\Wf \otimes \Cf}$, being a C*-algebra, has
an abundance of representations. Yet since it describes
static charges, it does not have 
representations where energy operators can be defined.
We therefore restrict our attention 
to a subalgebra $\overline{\Wf}$.
It is generated by gauge invariant
operators in $\Wf$ built from elements of $\Vf$ and
operators  ${W}(m)$ with regular densities $m$. 
Such densities are obtained by choosing in
relation~\eqref{e.2.10} test functions $\vartheta$;
we denote the corresponding regular subspace by $\overline{M}(\RR^4)$. 
Operators in $\overline{\Wf}$ of the form 
\be \label{e.4.1} 
V_1 W(m_1) V_2 \cdots W(m_{n}) V_{n+1} \, , \qquad V_1, \dots V_{n+1} \in \Vf \, ,
\ee
are gauge invariant if $\delta \, (m_1 + \cdots + m_n) = 0$.
The regularity properties of $m$ ensure that there exist 
positive energy representations of $\overline{\Wf}$. 

\medskip 
Our arguments are based on the Gupta-Bleuler formalism.
Although we are dealing with gauge invariant
operators, the Gupta-Bleuler fields are needed in order to
obtain concrete representatives of the abstract unitaries $W(m)$. They 
are not separately defined in these representations, but
become meaningful in gauge invariant combinations. 
This fact provides an alternative argument that 
gauge bridges between static charges cannot be 
constructed by means of only the electromagnetic field. 

\medskip
We briefly sketch the well known construction of the Gupta-Bleuler
framework, cf.\ for example \cite{St,Str}.
Let $\Sc(\RR^4)$ be the space of real, vector-valued test functions. We
denote the exponentials of the Gupta-Bleuler fields by the symbols 
\be
e^{iA(u)} \, , \quad u \in \Sc(\RR^4) \, .
\ee
They satisfy the relations, $a \in \RR$, 
\be \label{e.4.3} 
e^{iA(u)} e^{iaA(v)} = e^{i a \langle u, \Delta v \rangle} \, e^{iA(u + a v)} \, ,
\quad \big( e^{iA(u)} \big) {}^* =  e^{- iA(u)} \, ,
\quad  e^{iA(0)} = 1 \, ,
\ee
where we used the short hand notation 
\be
\langle u, \Delta v \rangle \doteq \int \! dx dy \, u^\mu(x) \, 
\Delta(x-y) \, v_\mu(y) \, .
\ee
A linear, hermitian (but not positive), and Poincar\'e
invariant functional $\varpi$ on the
algebra generated by finite sums of these exponentials is given by
\be
\varpi( e^{iA(u)}) =  e ^{\, (1/2) \, \langle u, \Delta_+ u \rangle} \, ,
\quad u \in \Sc(\RR^4) \, .
\ee
Here $\Delta_+$ is the positive frequency part of the Pauli-Jordan
function $\Delta$. The correlation functions
\be
P \mapsto \varpi( e^{iA(u)} \,   e^{iA(v_P)} ) \, ,
\quad P \in \Lc_+^\uparrow \ltimes \RR^4  \, , \ \ u,v \in  \Sc(\RR^4) \, ,
\ee
are continuous and satisfy the relativistic spectrum condition.

\medskip 
One can identify now 
the generating elements of the abstract algebra $\Wf$ with 
exponentials of the Gupta-Bleuler fields, 
\begin{align}
  V(a,g) & \mapsto e^{i a A(g)}  \, , \quad g \in \Cc_1(\RR^4) \, ,
  \ a \in \RR \, , \nonumber \\
  W(m) & \mapsto e^{i A(m)}  \, , \quad m \in \overline{M}(\RR^4) \, , \
  \delta \, m \neq 0 \, .
\end{align}
As is easily checked on the basis of the relations \eqref{e.4.3}, 
this identification complies with all defining algebraic relations 
of $\Wf$. In particular, the automorphisms $\beta_m$,
$m \in \overline{M}(\RR^4)$, 
on $\overline{\Wf}$ are implemented
by non-gauge invariant Gupta-Bleuler operators,
\be
\beta_m(\overline{W}) = e^{iA(m)} \overline{W}  e^{-iA(m)} \, ,
\quad \overline{W} \in \overline{\Wf} \, . 
\ee

\medskip 
The restriction of $\varpi$
to the gauge invariant subalgebra $\overline{\Wf}$ is a
positive functional, describing the vacuum state.
This assertion follows from the fact that this restriction satisfies
the Gupta-Bleuler condition, 
\be
x \mapsto \varpi(e^{iA(u)} \, \delta A(x) \ e^{iA(v)} ) = 0 \quad
\text{if} \quad \delta u = \delta v = 0 \, , \ \ u,v \in \Sc(\RR^4) \, .
\ee
It is a consequence of relations \eqref{e.4.1} and \eqref{e.4.3}. 
Proceeding to the GNS-representation induced by
$\varpi \upharpoonright \overline{\Wf}$, one obtains a continuous,
unitary representation $P \mapsto U(P)$ of the Poincar\'e group,
satisfying the relativistic spectrum condition. Since the
restriction of $\varpi$ to $\overline{\Wf}$
defines a pure state (as a consequence
of its clustering properties), the operators $U(P)$ are
elements of the weak closure of $\overline{\Wf}$. In this sense 
they are gauge invariant.

\medskip
Let us likewise consider the representations of $\overline{\Wf}$
which are induced by the pair creating automorphisms
$\beta_m$, $m \in \overline{M}(\RR^4)$. There arises the question whether
these representations
are also covariant. Indeed, one obtains for the
automorphic action $\overline{W} \mapsto\overline{W}_P$ of the
Poincar\'e group on $\overline{\Wf} $, 
\be
\beta_m(\overline{W}_P) = e^{iA(m)} U(P) e^{-iA(m)} \,
\beta_m(\overline{W}) \
e^{iA(m)} U(P^{-1}) e^{-iA(m)}  \, . 
\ee
But it would be premature to conclude from this observation that
the (formally gauge invariant, unitary)
operators $e^{iA(m)} U(P) e^{-iA(m)}$
are elements of the weak closure of $\overline{\Wf}$, as 
would be necessary for an affirmative answer. 
Note that the representation
$U$ does not induce Poincar\'e transformations of the non-gauge invariant
operators. Otherwise, the cocycles
$P \mapsto e^{iA(m)} e^{-iA(m_P)}$ would be gauge invariant, which
is obviously not the case. This feature contrasts with the case of
covariant localizable charges, where the corresponding cocycles arising from
spacetime translations in charged sectors are generically 
inner in the algebras of observables \cite[Sect.\ II]{DoHaRo}. 

\medskip 
The analysis of these formal representations
of spacetime transformations requires some 
detailed computations, which we briefly sketch. 
The action of the automorphisms $\beta_m$ on the
field operators
$F_{\mu \nu} = (\partial_\mu A_\nu - \partial_\nu A_\mu)$, being
defined in the sense of operator-valued distributions, 
is given by
\be
x \mapsto \beta_m(F_{\mu \nu}(x)) = F_{\mu \nu}(x) +
(\partial_\mu \, \underline{m}_\nu -
\partial_\nu \, \underline{m}_\mu)(x) \, 1 \, .
\ee
Here the zero mass shell restrictions
$\underline{m}$ of the densities $m$ enter,   
\be
x \mapsto \underline{m}_\mu(x) \doteq
\int \! dy \, m_\mu (y) \, \Delta(y - x) \, . 
\ee
Thus the representations induced by $\beta_m$ describe the electromagnetic
field in presence of the classical currents 
$x \mapsto \partial_{\mu \,} \partial^\nu \underline{m}_\nu(x)$. 
The generators of spacetime transformations in the
vacuum representation are spatial integrals of
normal ordered bilinear expressions, involving the electric and magnetic 
field. It follows that the
generators in the representations induced by $\beta_m$ 
coincide with the generators in the vacuum representation, complemented by 
perturbations which are linear in the electromagnetic field,
respectively constant. Since the functions
$\underline{m}$ are smooth and have compact support at
fixed times, these perturbations are well defined and the
perturbed generators are hermitian operators. The proof that
they are also selfadjoint requires some    
thorough analysis. One can show in this manner that 
the unitary time translations are affiliated with 
the weak closure of $\overline{\Wf}$ and have generators
which are bounded from below. We skip these computations
and refer the interested reader to the literature on solutions of
the quantized Maxwell equations involving external currents,
cf.\ for example \cite[Sect.\ V.B.3]{De1}. Underlying 
functional analytic details are discussed in \cite{De2}.

\section{Conclusions}
\label{sec5}

In the present article we have continued our analysis of the universal
algebra of the electromagnetic field by discussing the effects 
of the presence of electric charges. In order to simplify the analysis, 
we have restricted our attention to static (infinitely heavy)
charges. We have also avoided the discussion of infrared problems
by  considering  only neutral pairs of charges which
are localized at finite spatial distances. These restrictions 
allowed us to focus on the impact of the electric charges
on the electromagnetic field. 

\medskip
The upshot of the present investigation is the insight that,
as a consequence of Gauss' law, the
modifications of the electromagnetic field caused by
electric charges  
cannot be described by operations involving only the
electromagnetic field.
Within our framework, it found its expression in the fact that
the gauge bridges between charges
induce outer automorphisms of the universal algebra.
We have therefore enlarged this algebra to a bigger C*-algebra,
containing localized unitary operators implementing these automorphisms.
Noticing that these unitaries are not gauge invariant, we have
added charged field operators,
describing the static matter. Their combination
with the unitaries inducing gauge bridges leads to
well localized gauge invariant
operators, which describe within the C*-algebraic
framework bi-localized static charges.

\medskip 
By making use of the Gupta-Bleuler formalism, we have 
seen that the adjoint action of the
unitary operators creating gauge bridges can be
represented by exponential functions of non-observable gauge fields. 
In this manner gauge fields make their appearance within the framework
of the gauge invariant universal algebra. The adjoint action of these  
non-observable unitaries also leads to physically
meaningful representation of the 
universal algebra, in accordance with the empirical fact that the presence of
electric charges 
has no adverse effects on the energetic
properties of the electromagnetic field. 
Since we had modeled the charged matter as being static it was, 
however, meaningless to discuss its energetic properties as well. 

\medskip
We conclude this article with some remarks on recent related
work by Mund, Rehren and Schroer \cite{MuReSch1, MuReSch2}. These
authors recognized that one may subsume the additional degrees of freedom,  
which are inherent in the gauge bridges, into a scalar, non-local 
field, called escort field. Instead of regarding this escort as  
companion of the electromagnetic field, they propose to add it to
the charged matter part. In this manner they produce gauge
invariant but non-local field operators. Their approach may have
computational
advantages when considering dynamical matter since one can work from the
outset in Hilbert space representations.

\medskip 
The views advocated by these authors 
are not in conflict with the present results. In  
contrast to their approach, we have put forward the localization
properties of physical operations, such as the creation of charged
pairs, without alluding from the outset to the idea of pushing
compensating charges to infinity. Having worked with 
C*-algebras, we have avoided the usage of indefinite
metric formalisms as well. Our excursion to the Gupta-Bleuler formalism
was conducted merely for the sake of illustration. Thus in view of
recent progress in the formulation of dynamical
C*-algebras, cf.\ \cite{BDFR1,BDFR2} and references quoted there,
one may hope that the algebraic framework  
can be expanded into a consistent theory for describing 
the electromagnetic field also in presence of dynamical charged matter. 

\section*{Acknowledgment}

\vspace*{-2mm}
DB  gratefully acknowledges the 
generous and lasting support of Roberto Longo and the
University of Rome ``Tor Vergata'', which made
this collaboration possible. He is also grateful
to Dorothea Bahns and the Mathematics Institute
of the University of G\"ottingen for their continuing hospitality. 
FC and GR acknowledge the MIUR Excellence Department Project awarded to the
Department of Mathematics, University of Rome ``Tor Vergata'', CUP
E83C18000100006, the ERC Advanced Grant 669240 QUEST ``Quantum
Algebraic Structures and Models'' and GNAMPA-INdAM. 
FC is supported in part by MIUR-FARE R16X5RB55W QUEST-NET ``Operator Algebras
and (non)-equilibrium Thermodynamics in Quantum Field Theory''.

\section*{Dedication}

Dedicated to Bert Schroer on the occasion of his 88th birthday.


\begin{thebibliography}{22}
{\small 


\bibitem{BDFR1} R.\ Brunetti, M.\ D\"utsch, K.\ Fredenhagen, K.\ Rejzner,
``C*-algebraic approach to interacting quantum field theory:
Inclusion of Fermi fields'', e-print: arXiv:2103.05740

\bibitem{BDFR2} R.\ Brunetti, M.\ D\"utsch, K.\ Fredenhagen, K.\ Rejzner,
``The unitary Master Ward Identity: Time slice axiom, Noether's Theorem
and Anomalies'', e-print: arXiv:2108.13336

\bibitem{BuCiRuVa2015} D.\ Buchholz, F.\ Ciolli, G.\ Ruzzi,
  E.\ Vasselli, ``The Universal C*-Algebra of the Electromagnetic
  Field'', Lett.\ Math.\ Phys.\ \textbf{106} (2016) 269–285 


\bibitem{BuCiRuVa2019}  D.\ Buchholz, F.\ Ciolli, G.\ Ruzzi,
  E.\ Vasselli, ``On string-localized potentials and gauge fields'',
  Lett.\  Math.\ Phys.\ \textbf{109} (2019) 2601–2610 

\bibitem{BuDoMoRoSt} D.\ Buchholz, S.\ Doplicher,
  G.\ Morchio, J.E.\ Roberts, F.\ Strocchi,
  ``Quantum Delocalization of the Electric Charge'',
  Ann.\ Phys.\ \textbf{290} (2001) 53-66

\bibitem{De1} J.\ Derezi\'nski, 
  ``Quantum fields with classical perturbations'',
J.\ Math.\ Phys.\ \textbf{55} 075201 (2014)

\bibitem{De2} J.\ Derezi\'nski,
``Van Hove Hamiltonians - Exactly solvable models
of the infrared and ultraviolet problem'', 
Ann.\ H.\ Poincar\'e \textbf{4} (2003) 713 – 738

\bibitem{DoHaRo} S.\ Doplicher, R.\ Haag,  J.E.\ Roberts, 
 ``Local observables and particle statistics. II'',
 Commun.\ Math.\ Phys.\ \textbf{35} (1974)  49-85   
  

\bibitem{MuReSch1}
  J.\ Mund, KH.\  Rehren, B.\ Schroer, ``Gauss’ law and string-localized
  quan\-tum field theory'', J.\ High Energ.\ Phys.\ \textbf{2020}, 1 (2020).
  DOI 10.1007/JHEP01(2020)001

\bibitem{MuReSch2}
  J.\ Mund, KH.\  Rehren, B.\ Schroer, ``Infraparticle quantum
  fields and the formation of photon clouds'', e-print: arXiv:2109.10342
 

\bibitem{St} O.\ Steinmann, 
\textit{Perturbative Quantum Electrodynamics and Axiomatic Field Theory},
Springer (2000)

\bibitem{Str} F.\ Strocchi,
\textit{General properties of Quantum Field Theory},
LNP 51, World Scientific (1993).

}
\end{thebibliography}
\end{document}